\newcommand{\RN}[1]{%
  \textup{\expandafter{\romannumeral#1}}%
}
\newcommand\remove[1]{}
\setlist[enumerate]{leftmargin=*}
\newtheorem{theorem}{Theorem}
\newtheorem{definition}{Definition}
\newtheorem{lemma}{Lemma}
\newtheorem{construction}{Construction}
\newcommand{\cC}{\mathcal{C}}
\newcommand{\cF}{\mathcal{F}}
\newcommand{\cR}{\mathcal{R}}
\DeclareMathOperator{\lcm}{lcm}
\DeclareMathOperator{\ce}{ce}
\DeclareMathOperator{\co}{co}
\begin{document}
\title{New constructions of cooperative MSR codes: Reducing node size to $\exp(O(n))$}

\author{\IEEEauthorblockN{Min Ye}}

\maketitle

 


\begin{abstract}
We consider the problem of multiple-node repair in distributed storage systems under the {\em cooperative model}, where the repair bandwidth includes the amount of data exchanged between {\em any} two different storage nodes. Recently, explicit constructions of MDS codes with optimal cooperative repair bandwidth for all possible parameters were given by Ye and Barg (IEEE Transactions on Information Theory, 2019). The node size (or sub-packetization) in this construction scales as $\exp(\Theta(n^h))$, where $h$ is the number of failed nodes and $n$ is the code length.

In this paper, we give new explicit constructions of optimal MDS codes for all possible parameters under the cooperative model, and the node size of our new constructions only scales as $\exp(O(n))$ for any number of failed nodes. Furthermore, it is known that any optimal MDS code under the cooperative model (including, in particular, our new code construction) also achieves optimal repair bandwidth under the {\em centralized model}, where the amount of data exchanged between failed nodes is {\em not} included in the repair bandwidth. We further show that the node size of our new construction is also much smaller than that of the best known MDS code constructions for the centralized model.
\end{abstract}

\section{Introduction}\label{sect:intro}
Maximum Distance Separable (MDS) codes are widely used in distributed storage systems since they provide the optimal trade-off between the fault tolerance and storage overhead. More precisely, a distributed storage system encoded by an $(n,k)$ MDS code can tolerate the failure of any $r:=n-k$ storage nodes.
In practice, the system will repair the failed nodes when there are only one or a few ($<r$) node failures. In \cite{Dimakis10}, Dimakis et. al. suggested a new measure for the efficiency of the repair procedure, namely, the {\em repair bandwidth}, defined as the amount of data communicated between the storage nodes when repairing failed nodes.
Dimakis et. al. \cite{Dimakis10} further established the lower bound on the repair bandwidth, known as the {\em cut-set bound}, and showed the existence of codes that achieve the cut-set bound. Such codes are called {\em regenerating codes}. An important subclass of regenerating codes is {\em minimum storage regenerating} (MSR) code, that is, MDS code that achieves the cut-set bound with equality.
Constructions of MSR codes were proposed in \cite{Rashmi11,Tamo13,Ye16,Sasid16,Ye16a,Tamo17RS}.

While originally the concept of MSR codes (or more generally, regenerating codes) was proposed for single node repair \cite{Dimakis10}, studies into MSR codes
have expanded into the task of repairing multiple erasures. Multiple-node repair was mainly studied under two models: One is the {\em cooperative model}, where the repair bandwidth includes the amount of data exchanged between {\em any} two different storage nodes \cite{Kermarrec11,Shum13,Li14,Shum16}. The other is the {\em centralized model}, where all the failed nodes are recreated in one location, and the amount of data exchanged between the failed nodes is {\em not} included in the repair bandwidth \cite{Cadambe13,Ye16,Rawat18,Wang17,Zorgui17,Tamo19,Zorgui18}.
The cut-set bounds on the repair bandwidth under these two models were derived in \cite{Shum13} and \cite{Cadambe13} respectively.

Most studies of MSR codes in the literature are concerned with {\em array codes}\footnote{See \cite{Shanmugam14,Guruswami16,Ye16b,Tamo17RS,Dau17,Dau18,Chowdhury17,Mardia19,Tamo19} for results on repairing scalar MDS codes, such as Reed-Solomon codes.} \cite{Blaum98}. In particular, an $(n,k,l)$ MDS array code over a finite field $F$ is formed of $k$ information nodes and $r=n-k$ parity nodes with the property that the contents of any $k$ out of $n$ nodes suffices to recover the codeword. Every node is a column vector in $F^l,$ where the parameter $l$ is called the node size or {\em sub-packetization}.
More precisely, let $\cC$ be an $(n,k,l)$ MDS array code over a finite field $F$ and let $C\in \cC$ be a codeword. 
We write $C$ as $(C_1,C_2,\dots,C_n)$,
where $C_i\in F^l, 1\le i\le n$ is the $i$th node of $C$. 
Let $\cF\subseteq [n], |\cF|=h$ and $\cR\subseteq[n]\backslash \cF, |\cR|=d$ be the sets of indices of the failed nodes and
the helper nodes, respectively, where we use the notation $[n]:=\{1,2,\dots,n\}$.
Define $N_{\co}(\cC,{\cF},{\cR})$ and $N_{\ce}(\cC,{\cF},{\cR})$ as the smallest number of symbols of the finite field $F$ one needs to download  in order to recover the failed  
nodes $\{C_i,i\in{\cF}\}$ from the helper nodes $\{C_j,j\in{\cR}\}$ under the cooperative model and the centralized model, respectively. The cut-set bounds read as follows:

\begin{theorem}[Cut-set bound \cite{Dimakis10,Cadambe13,Shum13,Ye19}] \label{def:csb}
Let $\cC$ be an $(n,k,l)$ MDS array code. For any two disjoint subsets ${\cF},{\cR}\subseteq[n]$ such that $|{\cF}|\le r$ and $|{\cR}|\ge k,$ we have the following inequalities:
\begin{align}
N_{\co}(\cC,{\cF},{\cR}) & \ge \frac{|{\cF}|(|{\cR}|+|{\cF}|-1)l}{|{\cF}|+|{\cR}|-k} , \label{eq:cutset} \\
N_{\ce}(\cC,{\cF},{\cR}) & \ge \frac{|{\cF}||{\cR}|l}{|{\cF}|+|{\cR}|-k} . \label{eq:csce}
\end{align}
\end{theorem}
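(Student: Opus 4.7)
The plan is to prove both bounds by an information-theoretic argument that leverages the MDS property of $\cC$. I first sketch the centralized bound \eqref{eq:csce}, which is the cleaner of the two, then describe how to extend the same idea to the cooperative setting.

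Write $h=|\cF|$ and $d=|\cR|$, and for each helper $j\in\cR$ let $W_j$ denote the packet it transmits during the repair, of symbol length $N_j$ with $\sum_{j\in\cR}N_j=N_{\ce}(\cC,\cF,\cR)$; by hypothesis $C_\cF$ is a deterministic function of $\{W_j:j\in\cR\}$. Fix any subset $\cR_2\subseteq\cR$ with $|\cR_2|=d+h-k$ and set $\cR_1=\cR\setminus\cR_2$, so that $|\cR_1\cup\cF|=k$. The MDS property then gives $H(C_{\cR_1})=(k-h)l$ and, using $d\ge k$, also $H(C_{\cR_1\cup\cR_2})=kl$, hence $H(C_{\cR_2}\mid C_{\cR_1})=hl$. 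Since $C_{\cR_2}$ is determined by $\{C_{\cR_1},W_\cR\}$ and each $W_j$ with $j\in\cR_1$ is determined by $C_j$, the chain rule yields
\begin{equation*}
hl \;=\; H(C_{\cR_2}\mid C_{\cR_1}) \;\le\; H(W_\cR\mid C_{\cR_1}) \;=\; H(W_{\cR_2}\mid C_{\cR_1},W_{\cR_1}) \;\le\; \sum_{j\in\cR_2}N_j.
\end{equation*}
Averaging this inequality over all $\binom{d}{d+h-k}$ choices of $\cR_2$ (in which each $j\in\cR$ lies in a fraction $(d+h-k)/d$ of the subsets) produces exactly \eqref{eq:csce}.

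For the cooperative bound \eqref{eq:cutset}, each failed node $i\in\cF$ additionally receives a packet $W_{i'\to i}$ from every $i'\in\cF\setminus\{i\}$, and the total symbol count over all helper-to-failed and failed-to-failed packets equals $N_{\co}(\cC,\cF,\cR)$. The subtlety is that $W_{i'\to i}$ is a function only of the data available to $i'$ at that stage, not of $C_{i'}$ itself. I would handle this by peeling: fix an order $i_1,\dots,i_h$ on $\cF$, and for each $t=1,\dots,h$ condition on $C_{i_1},\dots,C_{i_{t-1}}$ and re-run the centralized argument on the remaining $h-t+1$ failures, treating $\cR\cup\{i_1,\dots,i_{t-1}\}$ as the enlarged helper set whose outgoing packets to $\{i_t,\dots,i_h\}$ are the original downloads plus the inter-failure packets already scheduled. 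Summing the $h$ resulting inequalities accounts for all $h(d+h-1)$ incoming packets across the failed nodes, which produces the factor $(d+h-1)$ in the numerator of \eqref{eq:cutset}.

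The main obstacle is this peeling step in the cooperative setting: one must verify that, after conditioning on $C_{i_1},\dots,C_{i_{t-1}}$, each relevant inter-failure packet really does become a deterministic function of the conditioning variables together with packets already counted at previous stages, so that the centralized entropy chain goes through unchanged at stage $t$. The bookkeeping is delicate because the transmission schedule is only partially ordered, and one has to pick the ordering on $\cF$ and the partition of packets so that no packet is ``double-counted'' or unaccounted for. Once this is done, the final averaging is formally identical to the centralized case and delivers the claimed bound.
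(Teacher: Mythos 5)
The paper does not prove Theorem~\ref{def:csb} at all --- it is quoted from prior work (\cite{Dimakis10,Cadambe13,Shum13,Ye19}) and used as a black box, so there is no in-paper proof to compare against. Evaluating your argument on its own merits: the centralized half is correct and is essentially the standard entropy-counting derivation. The identity $H(C_{\cR_2}\mid C_{\cR_1})=hl$ from the MDS property, the observation that $C_{\cR_2}$ is a function of $(C_{\cR_1},W_{\cR})$, the removal of $W_{\cR_1}$ since each $W_j$ with $j\in\cR_1$ is a function of $C_j$, and the symmetrization over the $\binom{d}{d+h-k}$ choices of $\cR_2$ are all sound and give \eqref{eq:csce}.

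The cooperative half, however, is only a sketch, and the ``main obstacle'' you flag is a genuine gap, not mere bookkeeping. In the cooperative model, an inter-failure packet $W_{i'\to i}$ is a function of what node $i'$ has \emph{received} during the repair process --- typically (in the two-round model) a function of $\{W_{j\to i'}: j\in\cR\}$, hence of $C_{\cR}$ --- and is in general \emph{not} a function of $C_{i'}$ itself. So conditioning on $C_{i_1},\dots,C_{i_{t-1}}$ does not make the packets out of those nodes deterministic, and the step ``each $W_j$ with $j$ in the enlarged helper set is determined by $C_j$'' that drove the centralized chain breaks down. To salvage the peeling you would need to condition instead on the full transcript received by $i_1,\dots,i_{t-1}$, but then those conditioning variables already include helper-to-failed packets, and you must carefully avoid charging those packets twice across the $h$ stages while still landing on the factor $(d+h-1)$ rather than $d$. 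This is exactly the part that is not worked out, and it is where the cooperative bound differs substantively from the centralized one; the standard proofs in the cited references handle it via an information-flow-graph min-cut argument rather than a naive stage-by-stage entropy chain. As written, the cooperative bound \eqref{eq:cutset} is asserted but not established.
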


We use $n$ to denote the code length and $k$ to denote the code dimension throughout the paper.
MSR codes for multiple-node repair are defined as follows:
\begin{definition}[$(h,d)$-MSR code]
We say that a code $\cC$ is an $(h,d)$-MSR code under the cooperative (resp., centralized) model if 
{\em (1)} $\cC$ is an MDS code; 
{\em (2)} for any choice of $\cF\subseteq [n]$ and $\cR\subseteq[n]\backslash \cF$ with $|\cF|=h$ and $|\cR|=d$,
\begin{align*}
& N_{\co}(\cC,{\cF},{\cR})  = \frac{|{\cF}|(|{\cR}|+|{\cF}|-1)l}{|{\cF}|+|{\cR}|-k} ,  \\
& \Big( \text{resp.,~~} N_{\ce}(\cC,{\cF},{\cR})  = \frac{|{\cF}||{\cR}|l}{|{\cF}|+|{\cR}|-k}
\Big) . 
\end{align*}
\end{definition}

In \cite{Ye19}, it was shown that the cooperative model is stronger than the centralized model. More precisely,
\begin{theorem}[{\cite[Theorem 2]{Ye19}}]\label{thm:st}
Let $\cC$ be an $(n,k,l)$ MDS array code and let ${\cF},{\cR}\subseteq[n]$ be two disjoint subsets such that $|{\cF}|\le r$ and $|{\cR}|\ge k.$ If
$$
N_{\co}(\cC,{\cF},{\cR}) = \frac{|{\cF}|(|{\cR}|+|{\cF}|-1)l}{|{\cF}|+|{\cR}|-k},
$$
then
$$
N_{\ce}(\cC,{\cF},{\cR}) = \frac{|{\cF}||{\cR}|l}{|{\cF}|+|{\cR}|-k}.
$$
\end{theorem}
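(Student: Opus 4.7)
Given an optimal cooperative scheme for $(\cF,\cR)$, the plan is to extract from it a centralized scheme whose bandwidth matches the centralized cut-set $\frac{hdl}{d+h-k}$ from Theorem~\ref{def:csb}. Write $y_{ji}$ for the phase-1 message from helper $j\in\cR$ to failed node $i\in\cF$, $s_{ji}$ for the phase-2 message from failed node $j$ to failed node $i$, and $Y_j:=(y_{j'j})_{j'\in\cR}$ for the phase-1 bundle collected by $j$; set $N_1:=\sum_{j\in\cR,\,i\in\cF}|y_{ji}|$ and $N_2:=\sum_{j\ne i\in\cF}|s_{ji}|$, so that $N_1+N_2=\frac{h(d+h-1)l}{d+h-k}$ by hypothesis. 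The key structural fact I would exploit is that the failed nodes hold no information before phase 1, which forces each phase-2 packet to be a deterministic function $s_{ji}=f_{ji}(Y_j)$.

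The immediate consequence is a ``simulation'' of the cooperative scheme: have each helper $j$ transmit the full tuple $(y_{j1},\dots,y_{jh})$ to a central decoder, which then evaluates every $f_{ji}$ locally and runs each failed node's cooperative decoding routine to recover $\{C_i\}_{i\in\cF}$. This yields a centralized scheme of bandwidth exactly $N_1$, so $N_{\ce}(\cC,\cF,\cR)\le N_1$. Combined with the centralized cut-set $N_{\ce}\ge\frac{hdl}{d+h-k}$, this gives $N_1\ge\frac{hdl}{d+h-k}$ and, via cooperative tightness, $N_2\le\frac{h(h-1)l}{d+h-k}$.

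To finish the proof I would need to pin $N_1$ down to exactly $\frac{hdl}{d+h-k}$, which reduces to the matching lower bound $N_2\ge\frac{h(h-1)l}{d+h-k}$ on the intra-failure exchange in any optimal cooperative scheme. I would derive this from the equality case of the cooperative cut-set: tightness forces each single-node inequality $N_1^{(i)}+\sum_{j\ne i}|s_{ji}|\ge\frac{(d+h-1)l}{d+h-k}$ to hold with equality for every $i\in\cF$, and then a ``swap two elements of a $(k-1)$-subset'' symmetrisation --- the same combinatorial manoeuvre that underlies the classical single-node MSR cut-set derivation --- should force every per-source contribution to node $i$ to the common value $\frac{l}{d+h-k}$, immediately giving $N_1=\frac{hdl}{d+h-k}$ and $N_2=\frac{h(h-1)l}{d+h-k}$. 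The main obstacle I foresee is running this symmetrisation uniformly across true helpers and phase-2 ``helpers'': since $s_{ji}$ is a function of $Y_j$ rather than of $C_j$ itself, conditioning on $C_j$ inside the cut-set proof does not collapse its entropy the way it does for $y_{ji}$, so a careful entropy-accounting step --- essentially replacing each $s_{ji}$ by a $C_j$-measurable packet of the same size within the cut-set counting --- is needed to close the symmetry argument.
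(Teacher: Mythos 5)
Your overall reduction---simulate the cooperative scheme centrally by collecting only the phase-1 traffic, deduce $N_{\ce}\le N_1$, and combine with the centralized cut-set lower bound---is the right skeleton, and the observation that $s_{ji}$ is a deterministic function of $Y_j$ (because failed nodes start empty) is exactly the fact that makes the simulation valid. The problem is the step where you close the gap by lower bounding $N_2$.

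The per-node inequality $N_1^{(i)}+\sum_{j\ne i}|s_{ji}|\ge\frac{(d+h-1)l}{d+h-k}$ that you propose to ``tighten'' is not actually a consequence of the MDS property or of the cut-set argument, and it is in fact false. A concrete counterexample with $h=2$, $d=k$, $l=1$: let failed node $2$ download all $k$ helper symbols in phase 1 ($\beta_2=k$, enough to reconstruct the entire codeword), compute $C_1$ locally, and send it to node $1$ in phase 2 ($|s_{21}|=1$), with node $1$ downloading nothing in phase 1. This is a valid cooperative scheme meeting the cut-set bound $\frac{2(k+1)\cdot 1}{2}=k+1$ with equality, yet $N_1^{(1)}+|s_{21}|=1<\frac{(k+1)}{2}$ for $k\ge 2$. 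So ``tightness forces each single-node inequality to hold with equality'' has nothing to tighten, and the downstream swap/symmetrisation and per-source conclusion $|s_{ji}|=\frac{l}{d+h-k}$ fail (the example also has $|s_{12}|=0$). You correctly sense that the entropy accounting for $s_{ji}$ does not parallel that for $y_{ji}$, but the proposed fix of ``replacing $s_{ji}$ by a $C_j$-measurable packet of the same size'' is not justified: the whole point is that $H(s_{ji}\mid C_j)$ can be large.

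The good news is that none of the per-source information is needed; the obstacle can be sidestepped entirely. For each failed node $i$ and each subset $\cB\subseteq\cR$ of size $k-1$, the MDS property gives
$$
l = H(C_i\mid C_{\cB}) \le \sum_{j\in\cR\setminus\cB}|y_{ji}| + \sum_{j\in\cF\setminus\{i\}}|s_{ji}|,
$$
because the downloads from $\cB$ to $i$ are functions of $C_{\cB}$, while all phase-2 packets destined for $i$ are retained on the right. Crucially, one conditions only on helper nodes, never on failed nodes, so the difficulty you flagged never arises. Averaging over the $\binom{d}{k-1}$ choices of $\cB$ and summing over $i\in\cF$ yields
$$
\frac{d-k+1}{d}\,N_1 + N_2 \;\ge\; hl .
$$
Combining this with the hypothesis $N_1+N_2=\frac{h(d+h-1)l}{d+h-k}$ gives $N_1\le\frac{hdl}{d+h-k}$ directly (subtract and divide by $\frac{k-1}{d}$). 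Together with your simulation bound $N_{\ce}\le N_1$ and the centralized cut-set bound $N_{\ce}\ge\frac{hdl}{d+h-k}$, this forces $N_{\ce}=\frac{hdl}{d+h-k}$, completing the proof without any per-node or per-source structure.
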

Therefore, any $(h,d)$-MSR code under the cooperative model is also an $(h,d)$-MSR code under the centralized model.

In the rest of the paper we focus on the cooperative model. Unless stated otherwise, all the concepts and objects mentioned below, such as the repair bandwidth and the cut-set bound, implicitly assume this model.

Until the recent work of Ye and Barg \cite{Ye19}, constructions of $(h,d)$-MSR codes were known only for some special values of $h$ and $d$: Paper \cite{Shum13} constructed such codes for the (trivial) case $d=k$, and \cite{Li14} presented $(h,d)$-MSR codes for the cooperative repair of two erasures in the regime of low rate $k/n \le 1/2$ (more precisely, \cite{Li14} constructed $(2,d)$-MSR codes for any $n,k,d$ such that $2k-3\le d\le n-2$).
In \cite{Ye19}, Ye and Barg gave explicit constructions of $(h,d)$-MSR codes for all possible values of $n,k,h,d$, i.e., for all $n,k,h,d$ such that $2\le h \le n-d\le n-k-1$.
While the construction of \cite{Ye19} solves the open problem of constructing cooperative MSR codes for general parameters, the node size of this construction is $((h+d-k)(d-k)^{h-1})^{\binom n h}$. This quantity scales as $\exp(\Theta(n^h))$ if we fix $r:=n-k$ (the number of parity nodes) and let $n$ grow\footnote{Note that $h\le r$ and $h+d\le n$, so $d-k<h+d-k\le r$. Therefore, $(h+d-k)(d-k)^{h-1}$ is upper bounded by $r^r$, which is a constant. For small $h$, the exponent $\binom n h$ scales as $\Theta(n^h)$. Thus the node size scales as $\exp(\Theta(n^h))$.}. Since $h\ge 2$ for multiple-node repair, the node size of the construction in \cite{Ye19} is too large for practical applications.

In this paper, we give new explicit constructions of $(h,d)$-MSR codes for all possible values of $n,k,h,d$, i.e., for all $n,k,h,d$ such that $2\le h \le n-d\le n-k$.
The node size of this new construction is $(h+d-k)(d-k+1)^n$. It scales as $\exp(O(n))$ if we fix $r$ and let $n$ grow. This is much smaller compared to the $\exp(\Theta(n^h))$ scaling in the construction of \cite{Ye19}.
Our new codes can be constructed over any finite field of size no smaller than $(d-k+1)n$. This requirement on the field size is exactly the same as the construction in \cite{Ye19}.

Here we briefly describe the construction of $(h,d)$-MSR code in \cite{Ye19} and
explain how we reduce the node size in our new construction. 
In \cite{Ye19}, the authors started with a scalar MDS code. This code has node size $l=1$ but only admits the naive/trivial repair method. Then paper \cite{Ye19} proposed a ``transform" on MDS codes with the following property:  Each time we apply this transform to an MDS code, we will obtain another MDS code that achieves the cut-set bound \eqref{eq:cutset} for the repair of one more $h$-tuple of failed nodes at the price of increasing the node size by a factor of $(h+d-k)(d-k)^{h-1}$ compared to the original MDS code.
For example, if we apply this transform to the scalar MDS code with node size $1$, then we will obtain an MDS code $\cC_1$ that can repair one $h$-tuple of failed nodes with optimal bandwidth (achieving the cut-set bound), and the node size of $\cC_1$ is $(h+d-k)(d-k)^{h-1}$. If we apply this transform again to the code $\cC_1$, then we will obtain an MDS code $\cC_2$ that can repair two $h$-tuples of failed nodes with optimal bandwidth, and the node size of $\cC_2$ is $((h+d-k)(d-k)^{h-1})^2$.
In total there are $\binom{n}{h}$ choices of $h$-tuples. Therefore, after applying this transform $\binom{n}{h}$ times, we will obtain an MDS code that achieves the cut-set bound for the repair of any $h$-tuple of failed nodes, and the node size of this code is $((h+d-k)(d-k)^{h-1})^{\binom n h}$. This is the $(h,d)$-MSR code constructed in \cite{Ye19}, and we can see that the exponent $\binom{n}{h}=\Theta(n^h)$ is inherent in this construction.
Therefore, we take a different approach in our new constructions. Quite surprisingly, we find that $(h,d)$-MSR codes can be obtained by replicating the $(1,d)$-MSR code construction in \cite{Ye16} $(h+d-k)$ times. The node size of the $(1,d)$-MSR code construction in \cite{Ye16} is $(d-k+1)^n$, so the node size of our new code construction is $(h+d-k)(d-k+1)^n$. The main novelty of this paper lies in the careful design of a new repair scheme that allows our new code construction to achieve the cut-set bound.

As mentioned above, any $(h,d)$-MSR code under the cooperative model is also an $(h,d)$-MSR code under the centralized model. Therefore, our new code also achieves the optimal repair bandwidth under the centralized model. The node size of the centralized $(h,d)$-MSR code construction in \cite{Ye16} is\footnote{In \cite{Ye16}, the authors did not explicitly present the construction of centralized $(h,d)$-MSR codes for a specific pair of $(h,d)$. Instead, they gave a universal code construction with the $(h,d)$-MSR property for all $h\le r$ and all $k\le d\le n-h$ simultaneously. However, from the proof of Theorem~11 in \cite{Ye16}, it is clear that for a specific pair of $(h,d)$ one can construct a centralized $(h,d)$-MSR code with node size $(\lcm(d+1-k,d+2-k,\dots,d+h-k))^n$.}
$$
(\lcm(d+1-k,d+2-k,\dots,d+h-k))^n,
$$
where $\lcm$ stands for least common multiple.
Since $\lcm(d+1-k,d+2-k,\dots,d+h-k)\ge (d+1-k)(d+2-k)$, the node size of the construction in \cite{Ye16} is at least $(d+1-k)^n (d+2-k)^n$, and this is much larger\footnote{We always have $(d+2-k)^n\gg n>h+d-k$, so $(d+1-k)^n (d+2-k)^n \gg (h+d-k)(d-k+1)^n$.} than $(h+d-k)(d-k+1)^n$, the node size of our new code construction.
Therefore, our new construction significantly reduces the node size even compared to previous constructions of centralized MSR codes.
In Table~\ref{table:parameters}, we list the node size of $(h,d)$-MSR codes constructed in \cite{Ye16,Ye19} and this paper.

\begin{table}[ht]
\captionsetup{width=.8\linewidth,font=scriptsize}
\centering
\begin{tabular}{|c|c|c|c|}
\hline
  & repair model & node size & scaling of node size \\
\hline
Ye-Barg 2017 \cite{Ye16} & centralized & $(\lcm(d-k+1,\dots,d-k+h))^n$ & $\exp(O(n))$ \\
 \hline
Ye-Barg 2019 \cite{Ye19} & both cooperative and centralized & $((h+d-k)(d-k)^{h-1})^{\binom n h}$ & $\exp(\Theta(n^h))$ \\
 \hline
This paper & both cooperative and centralized & $(h+d-k)(d-k+1)^n$ & $\exp(O(n))$ \\
 \hline
 \end{tabular}
 \caption{Node size of different constructions of $(h,d)$-MSR codes under the cooperative model and the centralized model. In the column ``repair model", we specify under which model(s) the code construction achieves the optimal repair bandwidth.
 In the column ``scaling of node size", we consider the regime of fixed $r$ and growing $n$.}\label{table:parameters}
\end{table}

The rest of this paper is organized as follows: We start with the simplest case of $h=2$ and $d=k+1$ in Section~\ref{sect:2k} to illustrate the main ideas of our new constructions. In Section~\ref{sect:2d}, we present the code construction for $h=2$ and any value of $d$. Finally, we deal with the most general case in Section~\ref{sect:hd}.

\section{Construction of $(2,k+1)$-MSR codes} \label{sect:2k}

In this section, we present the code construction and the corresponding repair scheme for the special case of $h=2$ and $d=k+1$.
More precisely, we construct an $(n,k,l=3\times 2^n)$ MDS array code $\cC$ together with a repair scheme that achieves the cut-set bound \eqref{eq:cutset} for the repair of any $2$ failed nodes from any $k+1$ helper nodes.
Let $C=(C_1,C_2,\dots,C_n)$ be a codeword of $\cC$, where each node $C_i$ is a vector of length $l=3\times 2^n$. For $1\le i\le n$, we write $C_i=(c_{i,b,a}:b\in\{1,2,3\},a\in\{0,1,\dots,2^n-1\})$.
For $a\in\{0,1,\dots,2^n-1\}$, we write its $n$-digit binary expansion as $a=(a_1,a_2,\dots,a_n)$, where $a_i\in\{0,1\}$ for all $1\le i\le n$.
For $a\in\{0,1,\dots,2^n-1\},i\in[n],u\in\{0,1\}$, we further define $a(i,u):=(a_1,\dots,a_{i-1},u,a_{i+1},\dots,a_n)$, i.e., $a(i,u)$ is obtained by replacing the $i$th digit of $a$ with $u$. In particular, $a(i,a_i\oplus 1)$ is obtained by flipping the $i$th digit of $a$, where $\oplus$ denotes addition over the binary field.
Now we are ready to present our code construction.

\begin{construction} \label{cons:1}
Let $F$ be a finite field of size $|F|\ge 2n$. Let $\{\lambda_{i,j}:i\in[n],j\in\{0,1\}\}$ be $2n$ distinct elements of $F$.
The code $\cC$ is defined by the following parity check equations:
\begin{equation} \label{eq:c1}
\sum_{i=1}^n \lambda_{i,a_i}^t c_{i,b,a} = 0 \text{~~~for all~} t\in\{0,1,\dots,n-k-1\}, ~b\in\{1,2,3\}, ~a\in\{0,1,\dots,2^n-1\} .
\end{equation}
\end{construction}
By this definition, it is clear that for every $b\in\{1,2,3\}$ and every $a\in\{0,1,\dots,2^n-1\}$, the vector $(c_{1,b,a},c_{2,b,a},\dots,c_{n,b,a})$ forms an $(n,k)$ MDS code\footnote{More precisely, it is an $(n,k)$ Generalized Reed-Solomon code with evaluation points $(\lambda_{1,a_1},\lambda_{2,a_2},\dots,\lambda_{n,a_n})$.}.
Therefore, $\cC$ is an $(n,k,l=3\times 2^n)$ MDS array code.
Also observe that if we only vary the value of $a\in\{0,1,\dots,2^n-1\}$ and fix the value of $b$, then all the coordinates indexed by this fixed value of $b$ form the $(1,k+1)$-MSR code constructed in Section~IV of \cite{Ye16}. 
Therefore, our construction is obtained by replicating the construction in \cite{Ye16} three times (because $b$ takes three possible values).

Next we show how to repair any two failed nodes from any $k+1$ helper nodes with optimal repair bandwidth. Similarly to the repair scheme in \cite{Ye19}, we also divide the repair process into two rounds: In the first round, each failed node downloads $2^n$ symbols of $F$ from each of the $k+1$ helper nodes. After the first round, each failed node is able to recover $2\times 2^n$ coordinates of itself, and it also gathers some information about the other failed node. Then in the second round, each failed nodes downloads $2^n$ symbols of $F$ from the other failed nodes, after which both failed nodes are able to recover all their coordinates.
The total amount of data exchange in this repair process is $2(k+2)2^n$, meeting the cut-set bound \eqref{eq:cutset} with equality.

It is clear from the code construction \eqref{eq:c1} that every node in the code $\cC$ plays the same role. In other words, there is not a ``special" node in this code construction. Therefore, without loss of generality we assume that the set of indices of the failed nodes is $\cF=\{1,2\}$, i.e., the first two nodes fail. We will limit ourselves to this special case for the simplicity of notation, and the repair scheme for the general case of $\cF=\{i_1,i_2\}$ can be obtained by replacing the indices $1$ and $2$ with $i_1$ and $i_2$, respectively.

Let $\cR$ be the set of indices of the $k+1$ helper nodes.
In the first round of the repair process, the first node $C_1$ downloads
\begin{equation} \label{eq:d1}
\{c_{i,1,a}+c_{i,2,a(1,a_1\oplus 1)} : a\in\{0,1,\dots,2^n-1\}, i\in\cR\}
\end{equation}
from the helper nodes, and the second node $C_2$ downloads
\begin{equation} \label{eq:d2}
\{c_{i,1,a}+c_{i,3,a(2,a_2\oplus 1)} : a\in\{0,1,\dots,2^n-1\}, i\in\cR\}
\end{equation}
from the helper nodes.

\begin{lemma}
After downloading \eqref{eq:d1}, $C_1$ is able to recover
$$
\big\{c_{1,b,a}: a\in\{0,1,\dots,2^n-1\}, b\in\{1,2\} \big\} 
\bigcup \big\{c_{2,1,a}+c_{2,2,a(1,a_1\oplus 1)} : a\in\{0,1,\dots,2^n-1\} \big\} .
$$
After downloading \eqref{eq:d2}, $C_2$ is able to recover
$$
\big\{c_{2,b,a}: a\in\{0,1,\dots,2^n-1\}, b\in\{1,3\} \big\} 
\bigcup \big\{c_{1,1,a}+c_{1,3,a(2,a_2\oplus 1)} : a\in\{0,1,\dots,2^n-1\} \big\} .
$$
\end{lemma}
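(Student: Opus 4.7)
The plan is to prove the two statements separately by symmetry: both reduce to the same decoding argument, so I will describe the argument for $C_1$ and then note that the role of $C_2$ is identical with the obvious relabelling.

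Fix $a \in \{0,1,\dots,2^n-1\}$ and write $a' := a(1,a_1\oplus 1)$, which differs from $a$ only in the first coordinate. By the parity check equations \eqref{eq:c1}, both vectors $(c_{i,1,a})_{i=1}^n$ and $(c_{i,2,a'})_{i=1}^n$ are codewords of $(n,k)$ Reed-Solomon codes whose evaluation points coincide at every coordinate $i \geq 2$ (both equal $\lambda_{i,a_i}$) and disagree at coordinate $i=1$ (one is $\lambda_{1,a_1}$, the other is $\lambda_{1,a_1\oplus 1}$). Adding these two parity checks coordinatewise, and letting $v_i := c_{i,1,a}+c_{i,2,a'}$, yields for each $t \in \{0,1,\dots,r-1\}$ the identity
\begin{equation*}
\lambda_{1,a_1}^t\, c_{1,1,a} \;+\; \lambda_{1,a_1\oplus 1}^t\, c_{1,2,a'} \;+\; \sum_{i=2}^{n} \lambda_{i,a_i}^t\, v_i \;=\; 0 .
\end{equation*}

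Now observe what is known to $C_1$ after the first round. From \eqref{eq:d1}, $C_1$ knows $v_i$ for every $i \in \cR$, i.e., for $k+1$ indices. The remaining quantities in the above $r$ equations are $c_{1,1,a}$, $c_{1,2,a'}$, and $v_i$ for $i \in [n] \setminus (\cR \cup \{1\})$, a set of size $n-k-2$. Thus the total number of unknowns is $2 + (n-k-2) = r$, matching the number of equations. The coefficient matrix of this linear system has rows indexed by $t \in \{0,\dots,r-1\}$ and columns indexed by the $r$ distinct field elements $\lambda_{1,a_1},\lambda_{1,a_1\oplus 1}$ and $\lambda_{i,a_i}$ for $i \in [n]\setminus(\cR\cup\{1\})$; these $r$ values are pairwise distinct because the $2n$ elements $\{\lambda_{i,j}\}$ are distinct. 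Hence the system is a Vandermonde system, invertible, and $C_1$ recovers $c_{1,1,a}$, $c_{1,2,a'}$, and in particular $v_2 = c_{2,1,a}+c_{2,2,a'}$.

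Since $a \mapsto a(1,a_1\oplus 1)$ is an involution on $\{0,1,\dots,2^n-1\}$, letting $a$ range over all $2^n$ values produces $c_{1,1,a}$ for every $a$, produces $c_{1,2,a'}$ for every $a'$, and produces the sum $c_{2,1,a}+c_{2,2,a(1,a_1\oplus 1)}$ for every $a$. This is precisely the first claim. The argument for $C_2$ is identical after replacing $b=2$ by $b=3$ and $a(1,a_1\oplus 1)$ by $a(2,a_2\oplus 1)$.

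The only place that requires any care is the bookkeeping step that counts unknowns against equations and verifies that the $r$ evaluation points appearing in the combined parity check are distinct; aside from that, the proof is a direct application of the fact that an $r \times r$ Vandermonde matrix with distinct nodes is invertible.
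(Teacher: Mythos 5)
Your proposal is correct and is essentially the paper's own argument. The paper adds the two parity-check families, observes that the resulting vector of length $n+1$ is a GRS codeword of dimension $k+1$, and concludes that any $k+1$ coordinates (namely the downloaded $v_i$, $i\in\cR$) determine the rest; you unwind that same observation into an explicit $r\times r$ Vandermonde system with distinct nodes, which is precisely the mechanism behind the GRS code's MDS/erasure-decoding property, so the two presentations are interchangeable.
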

\begin{proof}
According to \eqref{eq:c1}, for every $a\in\{0,1,\dots,2^n-1\}$, we have
\begin{align*}
\sum_{i=1}^n \lambda_{i,a_i}^t c_{i,1,a} = 0
\text{~~and~~}
\lambda_{1,a_1\oplus 1}^t c_{1,2,a(1,a_1\oplus 1)}
+ \sum_{i=2}^n \lambda_{i,a_i}^t  & c_{i,2,a(1,a_1\oplus 1)} =0  \\
& \text{for all~} t\in\{0,1,\dots,n-k-1\} .
\end{align*}
Summing these two equations, we obtain that
\begin{align*}
\lambda_{1,a_i}^t c_{1,1,a}+ \lambda_{1,a_1\oplus 1}^t c_{1,2,a(1,a_1\oplus 1)}
+ \sum_{i=2}^n \lambda_{i,a_i}^t  (c_{i,1,a} + & c_{i,2,a(1,a_1\oplus 1)}) = 0 \\
& \text{for all~} t\in\{0,1,\dots,n-k-1\} .
\end{align*}
Therefore, the vector 
$$
(c_{1,1,a}, c_{1,2,a(1,a_1\oplus 1)}, c_{2,1,a} +  c_{2,2,a(1,a_1\oplus 1)},
c_{3,1,a} +  c_{3,2,a(1,a_1\oplus 1)}, \dots, c_{n,1,a} +  c_{n,2,a(1,a_1\oplus 1)})
$$
forms a Generalized Reed-Solomon (GRS) code of length $n+1$. Since there are $n-k$ parity check equations, the dimension of this GRS code is $(n+1)-(n-k)=k+1$.
Thus any $k+1$ coordinates suffice to recover the whole vector. In particular, since $|\cR|=k+1$, the elements in \eqref{eq:d1} allow $C_1$ to recover 
\begin{align*}
\big\{c_{1,1,a}: a\in\{0,1,\dots,2^n-1\} \big\} 
& \bigcup  \big\{c_{1,2,a(1,a_1\oplus 1)} : a\in\{0,1,\dots,2^n-1\} \big\}  \\
& \bigcup \big\{c_{2,1,a}+c_{2,2,a(1,a_1\oplus 1)} : a\in\{0,1,\dots,2^n-1\} \big\}.
\end{align*}
Finally, we conclude the proof of the first part of this lemma by noticing that 
$$
\big\{c_{1,2,a(1,a_1\oplus 1)} : a\in\{0,1,\dots,2^n-1\} \big\} =
\big\{c_{1,2,a}: a\in\{0,1,\dots,2^n-1\} \big\}  .
$$
The second part of this lemma can be proved in the same way, and we do not repeat it here.
\end{proof}

In the second round of the repair process, $C_1$ downloads
\begin{equation} \label{eq:ex1}
\big\{c_{1,1,a}+c_{1,3,a(2,a_2\oplus 1)} : a\in\{0,1,\dots,2^n-1\} \big\}
\end{equation}
from $C_2$, and $C_2$ downloads
$$
\big\{c_{2,1,a}+c_{2,2,a(1,a_1\oplus 1)} : a\in\{0,1,\dots,2^n-1\} \big\}
$$
from $C_1$.
Since $C_1$ already knows the values of the coordinates in 
$$
\big\{c_{1,1,a}: a\in\{0,1,\dots,2^n-1\} \big\} ,
$$
given the elements in \eqref{eq:ex1}, $C_1$ is able to further recover
$$
\big\{c_{1,3,a(2,a_2\oplus 1)} : a\in\{0,1,\dots,2^n-1\} \big\}
= \big\{c_{1,3,a}: a\in\{0,1,\dots,2^n-1\} \big\}.
$$
Similarly, we can show that $C_2$ is able to recover 
$$
\big\{c_{2,2,a}: a\in\{0,1,\dots,2^n-1\} \big\} 
$$
after the second round of the repair process. Therefore, we have shown that both $C_1$ and $C_2$ can indeed recover all their coordinates using our repair scheme.
This concludes the description of our repair scheme as well as the proof of the following theorem:
\begin{theorem}
The code $\cC$ given in Construction~\ref{cons:1} is a $(2,k+1)$-MSR code.
\end{theorem}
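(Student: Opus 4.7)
The plan is to verify two things: that $\cC$ is MDS, and that the two-round scheme described in the preceding pages meets the cut-set bound $\tfrac{|\cF|(|\cR|+|\cF|-1)l}{|\cF|+|\cR|-k} = 2(k+2)\cdot 2^n$ when $|\cF|=2$, $|\cR|=k+1$, and $l = 3\cdot 2^n$.

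First I would note that the MDS property is immediate from Construction~\ref{cons:1}: fixing any $b\in\{1,2,3\}$ and any $a\in\{0,1,\dots,2^n-1\}$, the parity checks in \eqref{eq:c1} restricted to the coordinates $(c_{i,b,a})_{i=1}^n$ are those of an $(n,k)$ Generalized Reed--Solomon code with the distinct evaluation points $\lambda_{i,a_i}$, so any $k$ nodes determine the remaining $r$ on every $(b,a)$-layer and hence on the whole codeword.

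For the repair analysis, by the symmetry of the construction under permuting node indices it suffices to treat $\cF=\{1,2\}$. The lemma proved above guarantees that after Round~1, $C_1$ knows $\{c_{1,b,a} : a, b\in\{1,2\}\}$ together with the mixed sums $\{c_{2,1,a}+c_{2,2,a(1,a_1\oplus 1)}\}_a$, and symmetrically $C_2$ knows $\{c_{2,b,a} : a, b\in\{1,3\}\}$ together with $\{c_{1,1,a}+c_{1,3,a(2,a_2\oplus 1)}\}_a$. In Round~2 the failed nodes exchange these two mixed-sum families, each of size $2^n$. Since $C_1$ already has $c_{1,1,a}$ for every $a$, subtracting yields $c_{1,3,a(2,a_2\oplus 1)}$ for every $a$; because flipping the second binary digit is an involution on $\{0,1,\dots,2^n-1\}$, this set equals $\{c_{1,3,a}\}_a$, completing the recovery of $C_1$. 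The argument for $C_2$ recovering $\{c_{2,2,a}\}_a$ is identical with the roles of bit~1 and bit~2 (and of coordinates $b=2,3$) swapped.

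Finally I would tally the bandwidth: Round~1 contributes $2\cdot(k+1)\cdot 2^n$ downloaded symbols and Round~2 contributes $2\cdot 2^n$ exchanged symbols, for a total of $2(k+2)\cdot 2^n$, matching \eqref{eq:cutset}. The only non-routine point is the Round~1 lemma itself, whose key trick is recognizing that $(c_{1,1,a},\,c_{1,2,a(1,a_1\oplus 1)},\,c_{2,1,a}+c_{2,2,a(1,a_1\oplus 1)},\dots)$ lies in a GRS code of dimension $k+1$ of length $n+1$; once this is in hand, and once one observes that the digit-flip map is a bijection on $\{0,\dots,2^n-1\}$, the rest is bookkeeping.
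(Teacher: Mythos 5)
Your proposal is correct and takes essentially the same route as the paper: verify the MDS property via the per-$(b,a)$-layer GRS structure, invoke the Round-1 lemma (built on the same length-$(n+1)$ GRS argument), and finish Round~2 by subtracting the already-recovered $c_{1,1,a}$ from the exchanged mixed sums, using that the digit-flip map is a bijection on $\{0,\dots,2^n-1\}$. The bandwidth tally $2(k+2)\cdot 2^n$ matches the paper's count exactly.
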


\section{Construction of $(2,d)$-MSR codes for general $d$}  \label{sect:2d}

In this section, we present the code construction and the corresponding repair scheme for $h=2$ and general values of $d$. Let $s:=d+1-k$.
We construct an $(n,k,l=(d+2-k)s^n)$ MDS array code $\cC$ together with a repair scheme that achieves the cut-set bound \eqref{eq:cutset} for the repair of any $2$ failed nodes from any $d$ helper nodes.
In this section, each node $C_i$ is a vector of length $l=(d+2-k)s^n$. For $1\le i\le n$, we write $C_i=(c_{i,b,a}:b\in\{1,2,\dots,d+2-k\},a\in\{0,1,\dots,s^n-1\})$.
For $a\in\{0,1,\dots,s^n-1\}$, we write its $n$-digit $s$-ary expansion as $a=(a_1,a_2,\dots,a_n)$, where $a_i\in\{0,1,\dots,s-1\}$ for all $1\le i\le n$.
For $a\in\{0,1,\dots,s^n-1\},i\in[n],u\in\{0,1,\dots,s-1\}$, we further define $a(i,u):=(a_1,\dots,a_{i-1},u,a_{i+1},\dots,a_n)$, i.e., $a(i,u)$ is obtained by replacing the $i$th digit of $a$ with $u$. 
Now we are ready to present our code construction.

\begin{construction} \label{cons:2}
Let $F$ be a finite field of size $|F|\ge s n$. Let $\{\lambda_{i,j}:i\in[n],j\in\{0,1,\dots,s-1\}\}$ be $sn$ distinct elements of $F$.
The code $\cC$ is defined by the following parity check equations:
$$
\sum_{i=1}^n \lambda_{i,a_i}^t c_{i,b,a} = 0 \text{~~~for all~} t\in\{0,1,\dots,n-k-1\}, ~b\in\{1,2,\dots,d+2-k\}, ~a\in\{0,1,\dots,s^n-1\} .
$$
\end{construction}
Similarly to the previous section, we can show that $\cC$ is an $(n,k,l=(d+2-k)s^n)$ MDS array code, and it is also obtained by replicating the construction in Section~IV of \cite{Ye16} $(d+2-k)$ times.

Our repair scheme is again divided into two rounds: In the first round, each failed node downloads $s^n$ symbols of $F$ from each of the $d$ helper nodes. After the first round, each failed node is able to recover $(d+1-k) s^n$ coordinates of itself, and it also gathers some information about the other failed node. Then in the second round, each failed nodes downloads $s^n$ symbols of $F$ from the other failed nodes, after which both failed nodes are able to recover all their coordinates.
The total amount of data exchange in this repair process is $2(d+1)s^n$, meeting the cut-set bound \eqref{eq:cutset} with equality.

Similarly to the previous section, we also assume that the set of failed nodes is $\cF=\{1,2\}$ for the simplicity of notation. The repair scheme for the general case of $\cF=\{i_1,i_2\}$ can be obtained by replacing the indices $1$ and $2$ with $i_1$ and $i_2$, respectively.

Let $\cR$ be the set of indices of the $d$ helper nodes.
In this section, we use $\oplus$ to denote addition modulo $s$.
In the first round of the repair process, the first node $C_1$ downloads
\begin{equation} \label{eq:d3}
\Big\{ \sum_{j=0}^{s-2} c_{i,j+1,a(1,a_1\oplus j)}
+ c_{i,s,a(1,a_1\oplus (s-1))}
 : a\in\{0,1,\dots,s^n-1\}, i\in\cR \Big\}
\end{equation}
from the helper nodes, and the second node $C_2$ downloads
\begin{equation} \label{eq:d4}
\Big\{ \sum_{j=0}^{s-2} c_{i,j+1,a(2,a_2\oplus j)}
+ c_{i,s+1,a(2,a_2\oplus (s-1))} : a\in\{0,1,\dots,s^n-1\}, i\in\cR \Big\}
\end{equation}
from the helper nodes.

\begin{lemma}
After downloading \eqref{eq:d3}, $C_1$ is able to recover
\begin{align*}
& \big\{c_{1,b,a}: a\in\{0,1,\dots,s^n-1\}, b\in \{1,2,\dots,s-1,s\} \big\}   \\
\bigcup & \big\{\sum_{j=0}^{s-2} c_{2,j+1,a(1,a_1\oplus j)}
+ c_{2,s,a(1,a_1\oplus (s-1))} : a\in\{0,1,\dots,s^n-1\} \big\} .
\end{align*}
After downloading \eqref{eq:d4}, $C_2$ is able to recover
\begin{align*}
& \big\{c_{2,b,a}: a\in\{0,1,\dots,s^n-1\}, b\in\{1,2,\dots,s-1,s+1\} \big\}  \\
\bigcup & \big\{ \sum_{j=0}^{s-2} c_{1,j+1,a(2,a_2\oplus j)}
+ c_{1,s+1,a(2,a_2\oplus (s-1))} : a\in\{0,1,\dots,s^n-1\} \big\} .
\end{align*}
\end{lemma}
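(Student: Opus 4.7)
The plan is to mirror the argument from Section~\ref{sect:2k}: for each fixed $a$, combine the $s$ parity-check equations of Construction~\ref{cons:2} that match the $s$ summands of \eqref{eq:d3}, producing a single GRS relation whose dimension will be exactly $|\cR|=d$. First I would rewrite the downloaded expression in \eqref{eq:d3} as the single sum $\sum_{j=0}^{s-1} c_{i,j+1,a(1,a_1\oplus j)}$; the split form in the lemma only highlights that the value $b=s+1$ is not used by $C_1$ (analogously, $b=s$ will be omitted for $C_2$ in \eqref{eq:d4}).

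Next, for each fixed $a\in\{0,\dots,s^n-1\}$ and each $t\in\{0,\dots,n-k-1\}$, I would sum the $s$ parity-check equations associated with the pairs $(b,a')=(j+1,\,a(1,a_1\oplus j))$, $j=0,\dots,s-1$. Since $a'$ agrees with $a$ outside the first digit (with $a'_1=a_1\oplus j$), the result is
\[
\sum_{j=0}^{s-1}\lambda_{1,a_1\oplus j}^{t}\,c_{1,j+1,a(1,a_1\oplus j)}+\sum_{i=2}^{n}\lambda_{i,a_i}^{t}\Bigl(\sum_{j=0}^{s-1} c_{i,j+1,a(1,a_1\oplus j)}\Bigr)=0.
\]
Because $\oplus$ is addition modulo $s$, the set $\{a_1\oplus j:j=0,\dots,s-1\}$ equals $\{0,1,\dots,s-1\}$, so the $n+s-1$ evaluation points $\lambda_{1,0},\dots,\lambda_{1,s-1},\lambda_{2,a_2},\dots,\lambda_{n,a_n}$ are pairwise distinct by hypothesis. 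Letting $t$ range over $\{0,\dots,n-k-1\}$, the vector whose entries are the $s$ unknowns $c_{1,j+1,a(1,a_1\oplus j)}$ together with the $n-1$ ``combined'' unknowns $T_i:=\sum_{j=0}^{s-1} c_{i,j+1,a(1,a_1\oplus j)}$ for $i\ge 2$ is a codeword of a GRS code of length $n+s-1$ and dimension $(n+s-1)-(n-k)=s-1+k=d$. Any $d$ coordinates determine it, and the download \eqref{eq:d3} supplies $C_1$ with exactly the $d=|\cR|$ coordinates $\{T_i:i\in\cR\}$, so $C_1$ can solve for the $s$ target entries of its own node as well as for $T_2$.

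The proof concludes with a simple counting step: as $a$ ranges over $\{0,\dots,s^n-1\}$, the indices $a(1,a_1\oplus j)$ sweep out all of $\{0,\dots,s^n-1\}$ for each fixed $j$, so $C_1$ recovers $c_{1,b,a}$ for every $b\in\{1,\dots,s\}$ and every $a$, and the recovered $T_2$'s are exactly the cross-terms in the second set listed in the lemma statement. The claim for $C_2$ follows by the verbatim symmetric argument after swapping nodes $1$ and $2$, replacing the first digit with the second, and replacing $b=s$ by $b=s+1$; I would only sketch it, since the dimension count is identical. There is no serious obstacle beyond that dimension count $n+s-1-(n-k)=d=|\cR|$, which was engineered into the definitions of the downloads \eqref{eq:d3}--\eqref{eq:d4}.
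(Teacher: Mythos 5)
Your proposal is correct and follows essentially the same route as the paper: sum the $s$ parity-check equations with $b=j+1$ (for $j<s-1$) and $b=s$ (for $j=s-1$) at indices $a(1,a_1\oplus j)$, observe that the resulting vector of $s$ own-node entries and $n-1$ combined entries $T_i$ lies in a length-$(n+s-1)$ GRS code of dimension $d$, so the $|\cR|=d$ downloaded coordinates determine everything, and finish by noting that $a\mapsto a(1,a_1\oplus j)$ is a bijection for each $j$. Your remark that \eqref{eq:d3} collapses to the single sum $\sum_{j=0}^{s-1}c_{i,j+1,a(1,a_1\oplus j)}$ (while \eqref{eq:d4} does not, since it uses $b=s+1$ for $j=s-1$) and your explicit check that the $n+s-1$ evaluation points are pairwise distinct are accurate small additions to the paper's argument but do not change the approach.
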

\begin{proof}
We only prove the first part of this lemma, and the second part can be proved in the same way.
According to the parity check equations in Construction~\ref{cons:2}, for every $a\in\{0,1,\dots,s^n-1\}$, we have
\begin{align*}
\lambda_{1,a_1\oplus j}^t c_{1,j+1,a(1,a_1\oplus j)}
+ \sum_{i=2}^n \lambda_{i,a_i}^t  & c_{i,j+1,a(1,a_1\oplus j)} =0  \\
& \text{for all~~} j\in\{0,1,\dots,s-1\} \text{~~and all~~} t\in\{0,1,\dots,n-k-1\} .
\end{align*}
Summing these equations over $j\in\{0,1,\dots,s-1\}$, we obtain that
\begin{align*}
\sum_{j=0}^{s-1} \lambda_{1,a_1\oplus j}^t c_{1,j+1,a(1,a_1\oplus j)}
+ \sum_{i=2}^n \lambda_{i,a_i}^t \Big( \sum_{j=0}^{s-1} & c_{i,j+1,a(1,a_1\oplus j)} \Big)=0   \\
& \text{for all~~} t\in\{0,1,\dots,n-k-1\} .
\end{align*}
Therefore, the vector 
$$
\Big( \big(c_{1,j+1,a(1,a_1\oplus j)}: j\in\{0,1,\dots,s-1\} \big),
\big(\sum_{j=0}^{s-1}  c_{i,j+1,a(1,a_1\oplus j)} : i\in\{2,3,\dots,n\} \big) \Big)
$$
forms a GRS code of length $n+s-1=n+d-k$. Since there are $n-k$ parity check equations, the dimension of this GRS code is $(n+d-k)-(n-k)=d$.
Thus any $d$ coordinates suffice to recover the whole vector. In particular, since $|\cR|=d$, the elements in \eqref{eq:d3} allow $C_1$ to recover 
\begin{align*}
& \big\{c_{1,j+1,a(1,a_1\oplus j)}: a\in\{0,1,\dots,s^n-1\}, j\in\{0,1,\dots,s-1\} \big\}   \\
\bigcup & \big\{\sum_{j=0}^{s-1} c_{2,j+1,a(1,a_1\oplus j)} : a\in\{0,1,\dots,s^n-1\} \big\} .
\end{align*}
Finally, we conclude the proof of the first part of this lemma by noticing that for every $j\in\{0,1,\dots,s-1\}$,
$$
\big\{c_{1,j+1,a(1,a_1\oplus j)}: a\in\{0,1,\dots,s^n-1\} \big\} =
\big\{c_{1,j+1,a}: a\in\{0,1,\dots,s^n-1\} \big\}  .
$$
\end{proof}

In the second round of the repair process, $C_1$ downloads
\begin{equation} \label{eq:ex2}
\big\{ \sum_{j=0}^{s-2} c_{1,j+1,a(2,a_2\oplus j)}
+ c_{1,s+1,a(2,a_2\oplus (s-1))} : a\in\{0,1,\dots,s^n-1\} \big\}
\end{equation}
from $C_2$, and $C_2$ downloads
$$
\big\{\sum_{j=0}^{s-2} c_{2,j+1,a(1,a_1\oplus j)}
+ c_{2,s,a(1,a_1\oplus (s-1))} : a\in\{0,1,\dots,s^n-1\} \big\}
$$
from $C_1$.
Since $C_1$ already knows the values of the coordinates in 
$$
\big\{c_{1,b,a}: a\in\{0,1,\dots,s^n-1\}, b\in\{1,2,\dots,s-1\} \big\} ,
$$
given the elements in \eqref{eq:ex2}, $C_1$ is able to further recover
$$
\big\{ c_{1,s+1,a(2,a_2\oplus (s-1))} : a\in\{0,1,\dots,s^n-1\} \big\}
= \big\{c_{1,s+1,a}: a\in\{0,1,\dots,s^n-1\} \big\}.
$$
Similarly, we can show that $C_2$ is able to recover 
$$
\big\{c_{2,s,a}: a\in\{0,1,\dots,s^n-1\} \big\} 
$$
after the second round of the repair process. Therefore, we have shown that both $C_1$ and $C_2$ can indeed recover all their coordinates using our repair scheme.
This concludes the description of our repair scheme as well as the proof of the following theorem:
\begin{theorem}
The code $\cC$ given in Construction~\ref{cons:2} is a $(2,d)$-MSR code.
\end{theorem}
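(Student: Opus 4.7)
The plan is to verify the two defining conditions for a $(2,d)$-MSR code: that $\cC$ is MDS, and that for every disjoint $\cF\subseteq[n]$ with $|\cF|=2$ and $\cR\subseteq[n]\setminus\cF$ with $|\cR|=d$ the scheme described above achieves equality in the cut-set bound \eqref{eq:cutset}.

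I would get the MDS property directly from the parity-check structure of Construction~\ref{cons:2}. For each fixed pair $(b,a)$, the equations force $(c_{1,b,a},\ldots,c_{n,b,a})$ to be a codeword of an $(n,k)$ Generalized Reed-Solomon code with distinct evaluation points $\lambda_{1,a_1},\ldots,\lambda_{n,a_n}$ (distinctness follows from the hypothesis $|F|\ge sn$ together with the choice of the $\lambda_{i,j}$). Since each such ``slice'' is MDS, the contents of any $k$ nodes of $\cC$ determine the remaining $n-k$ nodes, so $\cC$ is an $(n,k,l=(d+2-k)s^n)$ MDS array code.

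For optimal repair, I would first reduce to $\cF=\{1,2\}$ by appealing to the symmetry of Construction~\ref{cons:2} in the node index; the general $\cF=\{i_1,i_2\}$ case follows by relabelling $1\mapsto i_1$ and $2\mapsto i_2$. I would then account for the bandwidth: each failed node downloads $s^n$ symbols from each of the $d$ helper nodes in round one and $s^n$ symbols from the other failed node in round two, for a total of $2ds^n+2s^n=2(d+1)s^n$. Plugging $l=(d+2-k)s^n$ into \eqref{eq:cutset} yields $\frac{2(d+1)\cdot l}{d+2-k}=2(d+1)s^n$, so the bandwidth is exactly optimal.

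What remains is to confirm that both failed nodes reconstruct all their coordinates. The preceding lemma hands $C_1$ the coordinates with $b\in\{1,\ldots,s\}$ and hands $C_2$ the coordinates with $b\in\{1,\ldots,s-1,s+1\}$; crucially, it also tells us that $C_2$ has already recovered the linear combinations $\sum_{j=0}^{s-2} c_{1,j+1,a(2,a_2\oplus j)} + c_{1,s+1,a(2,a_2\oplus(s-1))}$ that it forwards to $C_1$ in round two, and symmetrically for $C_2$. After $C_1$ subtracts its known $c_{1,j+1,\cdot}$ terms from each incoming combination it isolates $\{c_{1,s+1,a(2,a_2\oplus(s-1))}\}$, which equals $\{c_{1,s+1,a}\}$ after reindexing; an analogous computation recovers $\{c_{2,s,a}\}$ at $C_2$. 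Since $b$ ranges over $\{1,\ldots,s+1\}=\{1,\ldots,d+2-k\}$, every coordinate is then recovered. The substantive work, namely the GRS-of-length-$(n+d-k)$ argument that decodes the round-one data, is already packaged inside the lemma, so the theorem proof itself amounts to the bandwidth accounting plus this short round-two verification; I expect no additional obstacle.
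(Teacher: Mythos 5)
Your proposal is correct and follows essentially the same route as the paper: the MDS property via the per-$(b,a)$ GRS slices, the symmetry reduction to $\cF=\{1,2\}$, the bandwidth count $2(d+1)s^n$ matching the cut-set bound, and the round-two subtraction argument that isolates $c_{1,s+1,\cdot}$ and $c_{2,s,\cdot}$ from the known $c_{\cdot,j+1,\cdot}$ for $j\le s-2$. The only nit is a harmless slip near the end where you write ``and symmetrically for $C_2$'' when you mean ``for $C_1$''.
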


\section{Construction of $(h,d)$-MSR codes for general $h$ and $d$} \label{sect:hd}

In this section, we deal with the most general cases. Let $s:=d+1-k$. For any $2\le h\le n-d\le n-k$, we construct an $(n,k,l=(d+h-k)s^n)$ MDS array code $\cC$ together with a repair scheme that achieves the cut-set bound \eqref{eq:cutset} for the repair of any $h$ failed nodes from any $d$ helper nodes.
In this section, each node $C_i$ is a vector of length $l=(d+h-k)s^n$. For $1\le i\le n$, we write $C_i=(c_{i,b,a}:b\in\{1,2,\dots,d+h-k\},a\in\{0,1,\dots,s^n-1\})$.
For $a\in\{0,1,\dots,s^n-1\}$, we write its $n$-digit $s$-ary expansion as $a=(a_1,a_2,\dots,a_n)$, where $a_i\in\{0,1,\dots,s-1\}$ for all $1\le i\le n$.
For $a\in\{0,1,\dots,s^n-1\},i\in[n],u\in\{0,1,\dots,s-1\}$, we further define $a(i,u):=(a_1,\dots,a_{i-1},u,a_{i+1},\dots,a_n)$, i.e., $a(i,u)$ is obtained by replacing the $i$th digit of $a$ with $u$. 
Now we are ready to present our code construction.

\begin{construction} \label{cons:3}
Let $F$ be a finite field of size $|F|\ge s n$. Let $\{\lambda_{i,j}:i\in[n],j\in\{0,1,\dots,s-1\}\}$ be $sn$ distinct elements of $F$.
The code $\cC$ is defined by the following parity check equations:
$$
\sum_{i=1}^n \lambda_{i,a_i}^t c_{i,b,a} = 0 \text{~~~for all~} t\in\{0,1,\dots,n-k-1\}, ~b\in\{1,2,\dots,d+h-k\}, ~a\in\{0,1,\dots,s^n-1\} .
$$
\end{construction}
Similarly to the previous section, we can show that $\cC$ is an $(n,k,l=(d+h-k)s^n)$ MDS array code, and it is also obtained by replicating the construction in Section~IV of \cite{Ye16} $(d+h-k)$ times.

Our repair scheme is again divided into two rounds: In the first round, each failed node downloads $s^n$ symbols of $F$ from each of the $d$ helper nodes. After the first round, each failed node is able to recover $(d+1-k) s^n$ coordinates of itself, and it also gathers some information about the other failed nodes. Then in the second round, each failed nodes downloads $s^n$ symbols of $F$ from each of the other $h-1$ failed nodes, after which all the failed nodes are able to recover all their coordinates.
The total amount of data exchange in this repair process is $h(d+h-1)s^n$, meeting the cut-set bound \eqref{eq:cutset} with equality.

Similarly to the previous section, we assume that the set of failed nodes is $\cF=\{1,2,\dots,h\}$ for the simplicity of notation. The repair scheme for the general case of $\cF=\{i_1,i_2,\dots,i_h\}$ can be obtained by replacing the index $j$  with $i_j$ for all $j\in[h]$.

Let $\cR\subseteq[n]\setminus\cF$ be the set of indices of the $d$ helper nodes.
In this section, we use $\oplus$ to denote addition modulo $s$.
In the first round of the repair process, the $u$th node $C_u, u\in[h]$ downloads
\begin{equation} \label{eq:d5}
\Big\{ \sum_{j=0}^{s-2} c_{i,j+1,a(u,a_u\oplus j)}
+ c_{i,s+u-1,a(u,a_u\oplus (s-1))}
 : a\in\{0,1,\dots,s^n-1\}, i\in\cR \Big\}
\end{equation}
from the helper nodes.

\begin{lemma}
After downloading \eqref{eq:d5}, the $u$th node $C_u, u\in[h]$ is able to recover
\begin{equation}  \label{eq:ttv}
\begin{aligned}
& \big\{c_{u,b,a}: a\in\{0,1,\dots,s^n-1\}, b\in \{1,2,\dots,s-1,s+u-1\} \big\}   \\
\bigcup & \big\{ \sum_{j=0}^{s-2} c_{i,j+1,a(u,a_u\oplus j)}
+ c_{i,s+u-1,a(u,a_u\oplus (s-1))}
 : a\in\{0,1,\dots,s^n-1\}, i\in[h]\setminus\{u\} \big\} .
\end{aligned}
\end{equation}
\end{lemma}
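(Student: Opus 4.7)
The plan is to follow the same GRS-decoding strategy used in the proofs of the lemmas in Sections~\ref{sect:2k} and~\ref{sect:2d}, now applied to $s$ parity check equations tailored to node $C_u$. Concretely, for each fixed $a\in\{0,1,\dots,s^n-1\}$, I would write down $s$ parity checks from Construction~\ref{cons:3}: for each $j\in\{0,1,\dots,s-2\}$, the check with $b=j+1$ evaluated at $a(u,a_u\oplus j)$, together with the check with $b=s+u-1$ evaluated at $a(u,a_u\oplus(s-1))$. Because these evaluation points agree with $a$ in every coordinate $i\neq u$, only the $\lambda_{u,\cdot}$ coefficient changes from one of the $s$ equations to the next; the coefficient of $c_{i,\cdot,\cdot}$ for $i\neq u$ is always $\lambda_{i,a_i}^t$.

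Summing these $s$ equations for each $t\in\{0,1,\dots,n-k-1\}$ yields
$$
\sum_{j=0}^{s-2}\lambda_{u,a_u\oplus j}^t\,c_{u,j+1,a(u,a_u\oplus j)}
+\lambda_{u,a_u\oplus(s-1)}^t\,c_{u,s+u-1,a(u,a_u\oplus(s-1))}
+\sum_{i\in[n]\setminus\{u\}}\lambda_{i,a_i}^t\,\sigma_{i,a}=0,
$$
where $\sigma_{i,a}:=\sum_{j=0}^{s-2}c_{i,j+1,a(u,a_u\oplus j)}+c_{i,s+u-1,a(u,a_u\oplus(s-1))}$ is precisely the symbol that helper $i$ transmits under \eqref{eq:d5}. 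I would then observe that the $n+s-1$ evaluation points $\{\lambda_{u,a_u\oplus j}:j\in\{0,\dots,s-1\}\}\cup\{\lambda_{i,a_i}:i\in[n]\setminus\{u\}\}$ are pairwise distinct, since all $sn$ constants $\lambda_{i,j}$ are distinct. Hence the resulting vector lies in a GRS code of length $n+s-1=n+d-k$ with $n-k$ parity checks, so of dimension $d$.

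Since $|\cR|=d$ and the elements in \eqref{eq:d5} give $C_u$ exactly the $d$ coordinates $\sigma_{i,a}$ for $i\in\cR$, decoding this GRS code recovers every remaining coordinate: the $s-1$ values $c_{u,j+1,a(u,a_u\oplus j)}$, the value $c_{u,s+u-1,a(u,a_u\oplus(s-1))}$, and all the sum coordinates $\sigma_{i,a}$ for $i\in[n]\setminus(\cR\cup\{u\})$. The latter in particular contains $\sigma_{i,a}$ for $i\in[h]\setminus\{u\}\subseteq [n]\setminus(\cR\cup\{u\})$, which is the second half of \eqref{eq:ttv}. To conclude the first half, I would let $a$ range over $\{0,1,\dots,s^n-1\}$ and use that for each fixed $j$ the map $a\mapsto a(u,a_u\oplus j)$ is a bijection on $\{0,1,\dots,s^n-1\}$; therefore the recovered set $\{c_{u,j+1,a(u,a_u\oplus j)}:a\}$ is exactly $\{c_{u,j+1,a}:a\}$ for each $b=j+1\in\{1,\dots,s-1\}$, and analogously for $b=s+u-1$.

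The bulk of the work is conceptual rather than computational, but two bookkeeping matters need care. First, the parity check indexed by $b=s+u-1$ must be paired specifically with the shift $a(u,a_u\oplus(s-1))$ so that the ``extra'' $\lambda_{u,a_u\oplus(s-1)}$ coefficient attaches to $c_{u,s+u-1,\cdot}$ and not to any $c_{u,j+1,\cdot}$; this is exactly what makes the sum collapse into a single GRS codeword per pair $(u,a)$ and also explains why the helper symbols in \eqref{eq:d5} are organised as they are. Second, I should note explicitly that GRS decoding actually recovers \emph{all} remaining $\sigma_{i,a}$ with $i\in[n]\setminus(\cR\cup\{u\})$, not just those with $i\in[h]\setminus\{u\}$; the lemma only records the subset that the second round of the cooperative repair scheme will subsequently consume.
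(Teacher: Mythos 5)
Your proof is correct and follows essentially the same route as the paper: pick the $s$ parity checks with $b=j+1$ at the shifted index $a(u,a_u\oplus j)$ for $j\le s-2$ plus the check with $b=s+u-1$ at $a(u,a_u\oplus(s-1))$, sum them to get a length-$(n+d-k)$ GRS codeword of dimension $d$, decode from the $d$ helper coordinates, and finish with the bijection $a\mapsto a(u,a_u\oplus j)$. Your two bookkeeping remarks (why $b=s+u-1$ pairs with the $(s-1)$-shift, and that decoding also yields $\sigma_{i,a}$ for $i\notin\cR\cup\{u\}$ beyond $[h]\setminus\{u\}$) are accurate and, if anything, spell out slightly more than the paper does, though they are not needed for the statement.
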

\begin{proof}
According to the parity check equations in Construction~\ref{cons:3}, for every $a\in\{0,1,\dots,s^n-1\}$ and every $t\in\{0,1,\dots,n-k-1\}$, we have
\begin{align*}
& \lambda_{u,a_u\oplus j}^t c_{u,j+1,a(u,a_u\oplus j)}
+ \sum_{i\neq u} \lambda_{i,a_i}^t   c_{i,j+1,a(u,a_u\oplus j)}  = 0  
 \text{~~for all~} j\in\{0,1,\dots,s-2\}   \\
& \lambda_{u,a_u\oplus (s-1)}^t c_{u,s+u-1,a(u,a_u\oplus (s-1))} + 
\sum_{i\neq u} \lambda_{i,a_i}^t  c_{i,s+u-1,a(u,a_u\oplus (s-1))}  = 0 .
\end{align*}
Summing these $s$ equations, we obtain that for every $a\in\{0,1,\dots,s^n-1\}$ and every $t\in\{0,1,\dots,n-k-1\}$,
\begin{align*}
\sum_{j=0}^{s-2} \lambda_{u,a_u\oplus j}^t c_{u,j+1,a(u,a_u\oplus j)}
& + \lambda_{u,a_u\oplus (s-1)}^t c_{u,s+u-1,a(u,a_u\oplus (s-1))} \\
& + \sum_{i\neq u} \lambda_{i,a_i}^t \Big( \sum_{j=0}^{s-2} c_{i,j+1,a(u,a_u\oplus j)}
+ c_{i,s+u-1,a(u,a_u\oplus (s-1))} \Big)  =0 .
\end{align*}
Therefore, the vector
\begin{align*}
\Big( \big( c_{u,j+1,a(u,a_u\oplus j)} : j\in\{0,1,\dots,s-2\}  \big)  , ~
& c_{u,s+u-1,a(u,a_u\oplus (s-1))}  ,  \\
& \big(\sum_{j=0}^{s-2} c_{i,j+1,a(u,a_u\oplus j)}
+ c_{i,s+u-1,a(u,a_u\oplus (s-1))} : i\in[n]\setminus\{u\}  \big)
\Big)
\end{align*}
forms a GRS code of length $n+s-1=n+d-k$. Since there are $n-k$ parity check equations, the dimension of this GRS code is $(n+d-k)-(n-k)=d$.
Thus any $d$ coordinates suffice to recover the whole vector. In particular, since $|\cR|=d$, the elements in \eqref{eq:d5} allow $C_u$ to recover 
\begin{align*}
& \big\{c_{u,j+1,a(u,a_u\oplus j)} : a\in\{0,1,\dots,s^n-1\}, j\in\{0,1,\dots,s-2\}  \big\}   \\
\bigcup & \big\{c_{u,s+u-1,a(u,a_u\oplus (s-1))} : a\in\{0,1,\dots,s^n-1\}  \big\} \\
\bigcup & \big\{ \sum_{j=0}^{s-2} c_{i,j+1,a(u,a_u\oplus j)}
+ c_{i,s+u-1,a(u,a_u\oplus (s-1))}
 : a\in\{0,1,\dots,s^n-1\}, i\in[h]\setminus\{u\} \big\} .
\end{align*}
Finally, we conclude the proof of this lemma by noticing that 
\begin{align*}
\big\{c_{u,j+1,a(u,a_u\oplus j)} : a\in\{0,1,\dots,s^n-1\} \big\} & =
\big\{c_{u,j+1,a}: a\in\{0,1,\dots,s^n-1\} \big\}  \\
& \hspace*{1.2in} \text{for every~}  j\in\{0,1,\dots,s-2\} , \\
\big\{c_{u,s+u-1,a(u,a_u\oplus (s-1))} : a\in\{0,1,\dots,s^n-1\}  \big\}
& = \big\{c_{u,s+u-1,a} : a\in\{0,1,\dots,s^n-1\}  \big\} .
\end{align*}
\end{proof}

In the second round of the repair process, the failed node $C_u,u\in[h]$ downloads
\begin{equation} \label{eq:ex3}
\big\{ \sum_{j=0}^{s-2} c_{u,j+1,a(i,a_i\oplus j)}
+ c_{u,s+i-1,a(i,a_i\oplus (s-1))}
 : a\in\{0,1,\dots,s^n-1\} \big\}
\end{equation}
from each of the other failed nodes $C_i,i\in[h]\setminus\{u\}$.
Since $C_u$ already knows the values of the coordinates in 
$$
\big\{c_{u,b,a}: a\in\{0,1,\dots,s^n-1\}, b\in\{1,2,\dots,s-1\} \big\} ,
$$
given the elements in \eqref{eq:ex3}, $C_u$ is able to further recover
$$
\big\{  c_{u,s+i-1,a(i,a_i\oplus (s-1))}
 : a\in\{0,1,\dots,s^n-1\} \big\}
 = \big\{  c_{u,s+i-1,a}
 : a\in\{0,1,\dots,s^n-1\} \big\}
$$
for all $i\in[h]\setminus\{u\}$.
Combining this with \eqref{eq:ttv}, we conclude that after the second round of repair, each failed node $C_u$ is able to recover
\begin{align*}
& \big\{c_{u,b,a}: a\in\{0,1,\dots,s^n-1\}, b\in \{1,2,\dots,s-1,s+u-1\} \big\}   \\
 & \bigcup \big\{  c_{u,s+i-1,a}
 : a\in\{0,1,\dots,s^n-1\}, i\in[h]\setminus\{u\} \big\}  \\
 = & \big\{c_{u,b,a}: a\in\{0,1,\dots,s^n-1\}, b\in \{1,2,\dots,s+h-1\} \big\},
\end{align*}
i.e., each failed node is able to recover all its coordinates.
This concludes the description of our repair scheme and proves the following theorem:
\begin{theorem}
The code $\cC$ given in Construction~\ref{cons:3} is a $(h,d)$-MSR code.
\end{theorem}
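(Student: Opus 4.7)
The plan is to verify the two defining properties of an $(h,d)$-MSR code: the MDS property, and attainment of the cut-set bound \eqref{eq:cutset} for every choice of failed set $\cF$ with $|\cF|=h$ and helper set $\cR\subseteq[n]\setminus\cF$ with $|\cR|=d$.

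First I would establish the MDS property directly from Construction~\ref{cons:3}. For each fixed pair $(b,a)$, the corresponding parity checks define an $(n,k)$ generalized Reed--Solomon code with evaluation points $(\lambda_{1,a_1},\dots,\lambda_{n,a_n})$; since the $\lambda_{i,j}$ are distinct, these evaluation points are distinct, so every such ``slab'' is MDS. Taking the direct product over $(b,a)$ shows that $\cC$ is an $(n,k,l)$ MDS array code. For the repair bandwidth, the symmetry of the construction under permutation of the $n$ node indices and of the $s$ values of each $a_i$ lets me assume without loss of generality that $\cF=\{1,2,\dots,h\}$.

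Next I would execute the two-round scheme described in the text preceding the theorem. Round~1 is covered by the preceding lemma: each failed node $C_u$ downloads \eqref{eq:d5} from the $d$ helpers and, by the GRS-decoding argument in that lemma, recovers the set \eqref{eq:ttv}; in particular $C_u$ already knows $c_{u,b,a}$ for all $b\in\{1,\dots,s-1\}\cup\{s+u-1\}$ and all $a$. For Round~2, the critical observation is that the message \eqref{eq:ex3} that $C_u$ requests from $C_i$ is precisely a quantity that $C_i$ already holds: substituting $u\leftrightarrow i$ in the second set of \eqref{eq:ttv} shows that after its own Round~1, $C_i$ possesses
\[
\sum_{j=0}^{s-2} c_{u,j+1,a(i,a_i\oplus j)}+c_{u,s+i-1,a(i,a_i\oplus (s-1))}
\]
for every $a$. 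Given these sums, $C_u$ subtracts the $s-1$ already-known summands $c_{u,j+1,a(i,a_i\oplus j)}$ (which lie in blocks $b\in\{1,\dots,s-1\}$ recovered in Round~1) to isolate $c_{u,s+i-1,a(i,a_i\oplus (s-1))}$; as $a$ ranges over $\{0,\dots,s^n-1\}$ this recovers every $c_{u,s+i-1,a}$.

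Finally I would combine the two rounds to see that $C_u$ knows $c_{u,b,a}$ for $b\in\{1,\dots,s-1\}\cup\{s+i-1:i\in[h]\}=\{1,\dots,s+h-1\}=\{1,\dots,d+h-k\}$, i.e., all of its coordinates, and tally the bandwidth: Round~1 contributes $h\cdot d\cdot s^n$ symbols and Round~2 contributes $h(h-1)\cdot s^n$ symbols, for a total of $h(d+h-1)s^n$. Since $l=(d+h-k)s^n$, this equals $\frac{h(d+h-1)l}{d+h-k}=\frac{|\cF|(|\cR|+|\cF|-1)l}{|\cF|+|\cR|-k}$, matching \eqref{eq:cutset} with equality. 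The main obstacle here is not any single calculation but the bookkeeping that aligns the Round~2 exchanges with what each sender has already decoded in its own Round~1, which is what allows the scheme to close after only one round of failed-node-to-failed-node communication.
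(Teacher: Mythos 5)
Your proof is correct and follows the same two-round repair argument as the paper: Round~1 uses the preceding lemma's GRS-decoding to recover \eqref{eq:ttv}, Round~2 exchanges \eqref{eq:ex3} between failed nodes and subtracts the known $b\in\{1,\dots,s-1\}$ summands, and the bandwidth tally $h(d+h-1)s^n$ matches \eqref{eq:cutset}. The one place you go beyond the paper's exposition is explicitly checking (via the $u\leftrightarrow i$ substitution in \eqref{eq:ttv}) that the Round~2 message \eqref{eq:ex3} is actually available at the sending node $C_i$ after its own Round~1 --- a point the paper leaves implicit but that is essential for the cooperative scheme to close.
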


\bibliographystyle{IEEEtran}
\bibliography{repair}

\end{document}